\newcommand{\copyrightstatement}{
\begin{textblock}{0.8}(0.1,0.92)    
\noindent
\footnotesize
\copyright 2020 IEEE.  Personal use of this material is permitted.  Permission from IEEE must be obtained for all other uses, in any current or future media, including reprinting/republishing this material for advertising or promotional purposes, creating new collective works, for resale or redistribution to servers or lists, or reuse of any copyrighted component of this work in other works.
\end{textblock}
}
\theoremstyle{definition}
\newtheorem{theorem}{Theorem}
\newtheorem{prop}[theorem]{Proposition}
\DeclareMathOperator*{\htop}{\mathsf{H}}
\DeclareMathOperator*{\trace}{tr}
\DeclareMathOperator*{\image}{Im}
\DeclareMathOperator*{\minimize}{minimize}
\title{Overdetermined Independent Vector Analysis}
\name{Rintaro Ikeshita, Tomohiro Nakatani, Shoko Araki}
\address{NTT Communication Science Laboratories, NTT Corporation, Kyoto, Japan}
\begin{document}
\copyrightstatement
\maketitle

\begin{abstract}
We address the convolutive blind source separation problem for the (over-)determined case where
(i) the number of nonstationary target-sources $K$ is less than that of microphones $M$, and
(ii) there are up to $M - K$ stationary Gaussian noises that need not to be extracted.
Independent vector analysis (IVA) can solve the problem by separating into $M$ sources and selecting the top $K$ highly nonstationary signals among them,
but this approach suffers from a waste of computation especially when $K \ll M$.
Channel reductions in preprocessing of IVA by, e.g., principle component analysis have the risk of removing the target signals.
We here extend IVA to resolve these issues.
One such extension has been attained by assuming the orthogonality constraint (OC) that the sample correlation between the target and noise signals is to be zero.
The proposed IVA, on the other hand, does not rely on OC and exploits only the independence between sources and the stationarity of the noises.
This enables us to develop several efficient algorithms based on block coordinate descent methods with a problem specific acceleration.
We clarify that one such algorithm exactly coincides with the conventional IVA with OC, and also explain that the other newly developed algorithms are faster than it.
Experimental results show the improved computational load of the new algorithms compared to the conventional methods.
In particular, a new algorithm specialized for $K = 1$ outperforms the others.
\end{abstract}
\begin{keywords}
Blind source separation, overdetermined, independent vector analysis, block coordinate descent method, generalized eigenvalue problem
\end{keywords}

\section{Introduction}
\label{sec:intro}

Blind source separation (BSS) is a problem of estimating the source signals from their observed mixtures~\cite{comon1994ica,cardoso1998bss}.
In this paper, we focus on the (over-)determined BSS where the number of nonstationary signals $K$ is less than that of microphones $M$, i.e., $K < M$.
There can be up to $M - K$ stationary Gaussian noises as long as the problem remains (over-)determined.\footnote{
	The assumption that there are at most $M - K$ noises is for the sake of developing efficient algorithms rigorously and can be violated to some extent (see Section~\ref{sec:exp}).
}
The goal is to extract $K$ nonstationary signals efficiently.
We do not care about separating the noises.

If the mixture is convolutive, \textit{independent vector analysis} (IVA~\cite{kim2007iva,hiroe2006iva}) is one of the most fundamental methods to solve BSS.
As a straightforward way, we can apply IVA as if there are $M$ nonstationary sources and select the top $K (< M)$ highly nonstaionary signals among the $M$ separated signals.
This method is, however, computationally intensive and does not run in realtime if $M$ is large.
To improve the computational efficiency, in preprocessing of IVA, we can reduce the number of channels up to $K$ by using principle component analysis or picking up $K$ channels with high SNR.
These channel reductions, however, have the risk of removing the target signals and often degrade the separation performance~\cite{scheibler2019over-iva}.

BSS methods for efficiently extracting just one or several target signals with specific properties such as non-Gaussianity has already been studied~\cite{friedman1974projection,huber1985projection,hyvarinen1997fastica,hyvarinen1999fastica,wei2015fastica,cruces2004blind}.
Among them, we only focus on methods in which the demixing filters can be optimized by using an \textit{iterative projection (IP)}~\cite{ono2010auxica,ono2011iva,ono2012iva-stereo} technique.
IP is a class of \textit{block coordinate descent (BCD)} methods specialized for optimizing the maximum-likelihood-based IVA, with the advantages of high computational efficiency and no hyperparameters.
Recently proposed \textit{OverIVA}~\cite{scheibler2019over-iva} (IVA for the overdetermined case), which is a multi-target-source extension of independent component/vector extraction (ICE/IVE~\cite{koldovsky2017ive-eusipco,koldovsky2018ive}), takes the advantages of IP, resulting in significantly improved computational cost of IVA.
In addition to IP, OverIVA as well as ICE/IVE also relies on the orthogonality constraint (OC) that the sample correlation between the separated target and noise signals is to be zero.
As OC is a heuristic assumption for developing efficient algorithms, the open problems here are (i) to theoretically clarify the validity of OC, and (ii) to further accelerate OverIVA.

In this paper, we propose a different approach for BSS, which we call OverIVA as well.
It does not employ OC and exploits only the independence between sources and the stationarity of the Gaussian noises (see Section~\ref{sec:model}).
For this model, we develop several efficient optimization algorithms based on BCD, all of which can be viewed as extensions of IP (see Section~\ref{sec:alg}).
We then prove that one such algorithm exactly coincides with OverIVA with OC~\cite{scheibler2019over-iva}, meaning that the stationarity of the Gaussian noises implicitly implies OC (see Section~\ref{sec:relation}).
We also describe that the other newly developed algorithms are expected to be faster than OverIVA with OC (see Section~\ref{sec:relation}), and confirm their effectiveness in the experiment (see Section~\ref{sec:exp}).

\section{Problem Formulation}
\label{sec:problem-formulation}

Suppose that $K$ nonstationary source signals are mixed with a stationary noise, or possibly a silent signal, of dimension $M - K$ and observed with $M$ microphones, where $1 \leq K \leq M - 1$.
In the time-frequency domain, the mixture signal, $\bm{x}(f,t)$, is modeled by
\begin{align} \label{eq:mixture}
& \hspace{1mm} \bm{x}(f,t) = A_s(f) \bm{s}(f,t) + A_z(f) \bm{z}(f,t) \in \mathbb{C}^{M},
\\
\nonumber
A_s(f) &= [\, \bm{a}_{1}(f), \ldots, \bm{a}_{K}(f) \,] \in \mathbb{C}^{M \times K}, \quad A_z(f) \in \mathbb{C}^{M \times (M - K)},
\\
\nonumber
\bm{s}(f,t) &= [\, s_{1}(f,t), \ldots, s_{K}(f,t) \,]^\top \in \mathbb{C}^K, \quad \bm{z}(f,t) \in \mathbb{C}^{M - K},
\end{align}
where $f \in \{1, \ldots,F \}$ and $t \in \{1, \ldots, T \}$ denote the frequency bin and time-frame indexes, respectively,
$\empty^{\top}$ is the transpose,
$\bm{a}_k(f)$ and $s_k(f,t)$ are the transfer function and signal for the target source $k$, respectively,
and $A_z(f)$ and $\bm{z}(f,t)$ are those for the noise source.

The  demixing matrix $W(f) = [ \bm{w}_1(f), \ldots, \bm{w}_M(f) ] \in \mathbb{C}^{M \times M}$ satisfying $W(f)^{\htop} [A_s(f), A_z(f)] = I_M$ translates \eqref{eq:mixture} into 
\begin{align}
\label{eq:s}
s_k(f,t) &= \bm{w}_k(f)^{\htop} \bm{x}(f,t) \in \mathbb{C}, \quad k \in \{ 1,\ldots,K \},
\\
\label{eq:z}
\bm{z} (f,t) &= W_z(f)^{\htop} \bm{x}(f,t) \in \mathbb{C}^{M - K},
\\
\label{eq:Wz}
W_z(f) &= [\, \bm{w}_{K + 1}(f), \ldots, \bm{w}_M(f) \,] \in \mathbb{C}^{M \times (M - K)},
\end{align} 
where $I_M$ is the identity matrix and $\empty^{\htop}$ is the Hermitian transpose.

The BSS problem dealt with in this paper is to recover the spatial images of the target nonstationary sources, $\{ \bm{a}_k(f) s_k(f,t) \}_{k,f,t}$
under the assumption that $K$ is given and the source signals are independent of each other.
Once the demixing matrices $\{ W_z(f) \}_f$ are obtained, we can estimate the spatial images using projection back technique~\cite{murata2001projection-back} as follows (here $\bm{e}_k \in \mathbb{C}^M$ denotes the unit vector whose $k$th element is equal to one and the others zero):
\begin{align}
\label{eq:projection-back}
\bm{a}_k(f) s_k(f,t) = (W(f)^{-\htop} \bm{e}_k) (\bm{w}_k(f)^{\htop} \bm{x}(f,t)) \in \mathbb{C}^M.
\end{align}

\section{Probabilistic model}
\label{sec:model}

We present the probabilistic model of the proposed OverIVA, which is almost identical to that of the ordinary IVA~\cite{kim2007iva,hiroe2006iva}.
The only difference is in the model of the stationary noise that we do not need to estimate.
In fact, the proposed model is defined by~\eqref{eq:s}--\eqref{eq:Wz} and the following~\eqref{eq:indep}--\eqref{eq:z-gauss}:
$\bm{s}_k(t) \coloneqq [\, s_k(1,t),\ldots, s_k(F,t) \,]^\top \in \mathbb{C}^F$, and
\begin{align}
\label{eq:indep}
& \hspace{-1.5cm} {\textstyle p( \{ \bm{s}_k(t), \bm{z}(f,t) \}_{k,f,t} ) = \prod_{k,t} p(\bm{s}_k(t)) \cdot \prod_{f,t}p(\bm{z}(f,t)), }
\\
\bm{s}_k (t) &\sim \mathbb{C} \mathcal{N} \left( \bm{0}, \lambda_k(t) I_F \right), \quad k \in \{1,\ldots,K \},
\\
\label{eq:z-gauss}
\bm{z}(f,t) &\sim \mathbb{C}\mathcal{N} \left( \bm{0}, I_{M - K} \right),
\end{align}
where $\{ \lambda_k(t) \}_t$ are frequency-independent time-varying variances modeling the power spectrum for source $k$.
The parameters to be optimized in the model are the demixing matrix $\bm{W} \coloneqq \{ W(f) \}_{f}$ and power spectra $\bm{\lambda} \coloneqq \{ \lambda_k(t) \}_{k,t}$.

Note that the noise model \eqref{eq:z-gauss} with the constant covariance matrix does not sacrifice generality.
At first glance, it seems better to employ $\bm{z}(f,t) \sim \mathbb{C}\mathcal{N} \left( \bm{0}, R(f) \right)$ with a general  $R(f)$ being a parameter to be optimized.
However, as we are not interested in the noise components, we can freely change the variables to satisfy \eqref{eq:z-gauss} using the ambiguity between $A_z(f)$ and $\{ \bm{z}(f,t) \}_{t = 1}^T$ as follows:
\begin{align}
\nonumber
A_z(f) \bm{z}(f,t) = (A_z(f) R(f)^{\frac{1}{2}}) (R(f)^{- \frac{1}{2}} \bm{z}(f,t)).
\end{align}

\section{Optimization}
\label{sec:alg}

We develop an algorithm for the maximum likelihood (ML) estimation of  the parameters $\bm{W}$ and $\bm{\lambda}$.
The ML estimation is attained by minimizing the negative log-likelihood $J$, which is computed as
\begin{align*}
J =& {\textstyle \sum_{k,t} \left[ \frac{ \| \bm{s}_k(t) \|^2 }{ \lambda_k(t) } + F \log \lambda_k(t) \right] + \sum_{f,t} \| \bm{z}(f,t) \|^2 }
\\
& \textstyle{  - 2 T \sum_f \log | \det W(f) | + C },
\end{align*}
where $C$ is a constant independent of the parameters.

As is often the case with IVA, the proposed algorithm updates $\bm{W}$ and $\bm{\lambda}$ alternately.
When $\bm{W}$ is kept fixed, $\bm{\lambda}$ can be globally optimized by $\lambda_k(t) = \frac{1}{F} \| \bm{s}_k(t) \|^2$.

In what follows, we will develop several computationally efficient algorithms that optimize $\bm{W}$, keeping $\bm{\lambda}$ fixed.
The objective function $J$ with respect to $\bm{W} = \{ W(f) \}_{f = 1}^F$ is additively separable for each frequency bin $f$.
This enables us to split the problem into $F$ independent problems, each of which is described as \eqref{problem-W} below. We hereafter abbreviate the frequency bin index $f$ to simplify the notation without mentioning it.
\begin{align} \label{problem-W}
& \minimize_{W}  J_{W}, \tag{P1}
\\
\nonumber
J_W &= {\textstyle \sum_{k = 1}^K \bm{w}_k^{\htop} G_{k} \bm{w}_k + \trace (W_z^{\htop} G_z W_z)  - 2 \log |\det W| },
\\
\nonumber
G_k &= {\textstyle \frac{1}{T} \sum_{t = 1}^T \frac{\bm{x}(t) \bm{x}(t)^{\htop}}{\lambda_{k}(t)} \in \mathbb{C}^{M \times M}, \quad k \in \{1,\ldots,K \} },
\\
\nonumber
G_z &= {\textstyle \frac{1}{T} \sum_{t = 1}^T \bm{x}(t) \bm{x}(t)^{\htop} \in \mathbb{C}^{M \times M} }.
\end{align}

\subsection{The algorithms to optimize $\bm{W}$}
\label{sec:opt-W}

If $K = M$ and the noise component does not exist, the problem~\eqref{problem-W} is known as the ML based ICA~\cite{pham1996ica-ml,pham2001ica-nonstationary,yeredor2012SeDJoCo,yeredor2009HEAD}.
Block coordinate descent (BCD) methods have been proposed to solve it~\cite{ono2010auxica,ono2011iva,ono2012iva-stereo,degerine2004ica-ML,degerine2006maxdet} and reported to be faster and to give higher separation performance than other algorithms such as the natural gradient method~\cite{amari1996natural-gradient} and FastICA~\cite{hyvarinen1997fastica,hyvarinen1999fastica} (see, e.g.,~\cite{ono2010auxica}).
The family of these BCD algorithms specialized to solve the ML based ICA is currently called an \textit{iterative projection (IP) method}~\cite{ono2018ip-ASJ}.
Even when $K < M$, IP can directly be applied to the problem~\eqref{problem-W} by translating the second term of $J_W$ as $\trace(W_z^{\htop} G_z W_z) = \sum_{k = K + 1}^M \bm{w}_k^{\htop} G_z \bm{w}_k$.
The problem is, however, a huge wasted computational cost especially in the case of $K \ll M$.

We therefore propose two different computationally efficient algorithms based on BCD to solve the problem \eqref{problem-W}.
We call them IP-1 and IP-2, respectively, because they can be viewed as extensions of the conventional IP, named IP-0 in this paper.
All algorithms exactly optimize one or several columns of $W$ in each iteration while keeping all other columns fixed.
The procedures of IP-1 and IP-2 as well as IP-0,  are summarized in Table~\ref{table:ip-procedure}.
As we will clarify in Section~\ref{sec:relation}, that the conventional OverIVA~\cite{scheibler2019over-iva} can be obtained from the proposed OverIVA by selecting IP-3 in Table~\ref{table:ip-procedure} as an optimization procedure of BCD.
To make the algorithms work properly, we need to introduce two technical but essential conditions (C1) and (C2):\footnote{
If (C1) is violated, the problem \eqref{problem-W} has no optimal solutions and the algorithms diverge to infinity (see {\cite[Proposition 1]{ike2019ilrma-ft}} for the proof).
In practice, we can always guarantee (C1) by adding small $\varepsilon I_M$ heuristically (see Algorithm~\ref{alg:iva}).
The condition (C2) is satisfied automatically if we initialize $W$ as nonsingular.
Intuitively, a singular $W$ implies $- \log | \det W | = +\infty$, which will never occur during optimization.
}
\begin{description}
\setlength{\itemsep}{-0pt}
\item[(C1)] $G_1, \ldots, G_K, G_z$ are positive definite.
\item[(C2)] Estimates of $W$ are always nonsingular during optimization.
\end{description}
As we will see, the algorithms are developed by exploiting the first order necessary optimality (stationary) conditions of the problem \eqref{problem-W} with respect to $\bm{w}_k$ $(k = 1,\ldots,K)$ and $W_z$,
which are expressed as follows (see, e.g.,~\cite{yeredor2009HEAD,yeredor2012SeDJoCo}):
%
\begin{alignat}{2}
\label{eq:KKT-w}
\frac{\partial J_W}{\partial \bm{w}_k^\ast} &= \bm{0}_M \quad &\Longleftrightarrow \quad W^{\htop} G_k \bm{w}_k &= \bm{e}_k \in \mathbb{C}^{M},
\\
\label{eq:KKT-Wz}
\frac{\partial J_W}{\partial W^\ast_z} &= O &\Longleftrightarrow \quad W^{\htop} G_z W_z &= E_z \in \mathbb{C}^{M \times (M - K)},
\end{alignat}
%
where 
$\empty^\ast$ denotes the element-wise conjugate,
$\bm{0}_M \in \mathbb{C}^M$ means the zero vector while $O \in \mathbb{C}^{M \times (M - K)}$ denotes the zero matrix,
and we define $E_z \coloneqq [\, \bm{e}_{K + 1}, \ldots, \bm{e}_{M} \,] \in \mathbb{C}^{M \times (M - K)}$.

\begin{table}
\begin{center}
{\footnotesize
\caption{Optimization procedure for each method}
\label{table:ip-procedure}
\vspace{-2mm}
\begin{tabular}{cl} \hline
Method & \multicolumn{1}{c}{Procedure}
\\ \hline
IP-0 (\S \ref{sec:ip-0}~\cite{ono2011iva}) & Optimize $\bm{w}_1, \ldots, \bm{w}_K, \ldots, \bm{w}_M$ cyclically.
\\
IP-1 (\S \ref{sec:ip-1}) & Optimize $\bm{w}_1, \ldots, \bm{w}_K, W_z$ cyclically.
\\
IP-2 (\S \ref{sec:ip-2}) & Optimize $\bm{w}_1$ and $W_z$ simultaneously (when $K = 1$).
\\
IP-3 (\S \ref{sec:relation}~\cite{scheibler2019over-iva}) & Optimize $\bm{w}_1, W_z, \bm{w}_2, W_z, \ldots, \bm{w}_K, W_z$ cyclically.
\\ \hline
\end{tabular}
}
\end{center}
\vspace{-8mm}
\end{table}

\vspace{-2mm}
\subsubsection{The review of the conventional IP: IP-0}
\label{sec:ip-0}

We review IP-0 for solving the problem~\eqref{problem-W}~(see \cite{ono2011iva} for the details).
In each iteration, IP-0 updates $\bm{w}_k$ for some $k = 1,\ldots,M~(> K)$ so that it globally minimizes $J_W$ with respect to $\bm{w}_k$ while keeping all other variables fixed.
This is achieved under (C1)--(C2) by
\begin{align}
\label{eq:ip0-u}
\bm{u}_k &\leftarrow (W^{\htop} G_k)^{-1} \bm{e}_k \in \mathbb{C}^M,
\\
\label{eq:ip0-w}
\bm{w}_k &\leftarrow \bm{u}_k \left( \bm{u}_k^{\htop} G_k \bm{u}_k \right)^{-\frac{1}{2}} \in \mathbb{C}^M.
\end{align}
Here, $G_k \coloneqq G_z$ $(k = K + 1, \ldots, M)$, and the $k$th column of $W$ in \eqref{eq:ip0-u}, $\bm{w}_k$, is set to the current estimate before update.
It is guaranteed that updating $\bm{w}_k$ with \eqref{eq:ip0-u}--\eqref{eq:ip0-w} will not increase the value of $J_W$.

\vspace{-0pt}
\subsubsection{The derivation of IP-1}
\label{sec:ip-1}

We derive IP-1 that updates $\bm{w}_1, \ldots,\bm{w}_K, W_z$ cyclically.
The update rules for $\bm{w}_1, \ldots, \bm{w}_K$ are given by \eqref{eq:ip0-u}--\eqref{eq:ip0-w} while that for $W_z$ is defined by \eqref{eq:ip1-Wz-fast} below.
To begin with, we enumerate all global solutions for the problem of minimizing $J_W$ with respect to $W_z$ when keeping $\bm{w}_1, \ldots, \bm{w}_K$ fixed.
\begin{prop}
\label{prop:ip1}
Suppose (C1) and (C2).
A matrix $W_z$ satisfies the stationary condition~\eqref{eq:KKT-Wz} if and only if
\begin{align}
\label{eq:ip1-Uz}
U_z &\leftarrow (W^{\htop} G_z )^{-1} E_z,
\\
\label{eq:ip1-Wz}
W_z &\leftarrow U_z \left(U_z^{\htop} G_z U_z  \right)^{-\frac{1}{2}} Q, \quad Q \in \mathrm{U}_{M - K},
\end{align}
where $\mathrm{U}_{M - K}$ is the set of all unitary matrices of size $M - K$,
and the last $M - K$ columns of $W$ in \eqref{eq:ip1-Uz}, $W_z$, are set to the current estimate before update.

Moreover, \eqref{eq:ip1-Uz}--\eqref{eq:ip1-Wz} with an arbitrary $Q \in \mathrm{U}_{M - K}$ globally minimizes $J_W$ if $\bm{w}_1, \ldots, \bm{w}_K$ are kept fixed.
\end{prop}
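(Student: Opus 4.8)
The plan is to reduce the stationary condition \eqref{eq:KKT-Wz} to a statement about $G_z$-orthonormal bases of a fixed subspace, and then to recognize the global-minimization claim as a classical scalar optimization. First I would partition $W = [W_s, W_z]$ with $W_s = [\bm{w}_1,\ldots,\bm{w}_K]$ held fixed, and split \eqref{eq:KKT-Wz} into its top $K$ and bottom $M-K$ block rows, obtaining the equivalent pair
\begin{align*}
W_s^{\htop} G_z W_z = O, \qquad W_z^{\htop} G_z W_z = I_{M - K}.
\end{align*}
Writing $\mathcal{S} \coloneqq \Span(W_s)$ and letting $\mathcal{S}^{\perp} = \{ \bm{v} : W_s^{\htop} G_z \bm{v} = \bm{0} \}$ be its $G_z$-orthogonal complement, the first equation says the columns of $W_z$ lie in $\mathcal{S}^{\perp}$ and the second says they are $G_z$-orthonormal. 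Since (C2) forces $W_s$ to have full column rank, so that $\dim \mathcal{S}^{\perp} = M - K$, the two conditions together are equivalent to: \emph{the columns of $W_z$ form a $G_z$-orthonormal basis of $\mathcal{S}^{\perp}$}. The key observation is that $\mathcal{S}^{\perp}$ depends only on the fixed $W_s$, never on the current estimate of $W_z$.

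For the two implications I would argue as follows. Forward: from \eqref{eq:ip1-Uz} we have $W^{\htop} G_z U_z = E_z$, whose top block is $W_s^{\htop} G_z U_z = O$, so the columns of $U_z$ lie in $\mathcal{S}^{\perp}$; by (C1)--(C2) the matrix $U_z$ has full column rank, hence $U_z^{\htop} G_z U_z$ is positive definite and $U_z$ actually spans $\mathcal{S}^{\perp}$. Then $\tilde{U}_z \coloneqq U_z (U_z^{\htop} G_z U_z)^{-1/2}$ is a $G_z$-orthonormal basis of $\mathcal{S}^{\perp}$, and right-multiplying by any $Q \in \mathrm{U}_{M-K}$ preserves both properties, so $W_z = \tilde{U}_z Q$ meets the two displayed equations (relative to the updated $W$, whose $W_s$-block is unchanged). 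Converse: any two $G_z$-orthonormal bases of $\mathcal{S}^{\perp}$ differ by a unitary factor, since if $W_z$ is a solution then $W_z = \tilde{U}_z Q$ for some invertible $Q$, and substituting into $W_z^{\htop} G_z W_z = I_{M-K}$ yields $Q^{\htop} Q = I_{M-K}$; this is exactly \eqref{eq:ip1-Uz}--\eqref{eq:ip1-Wz}.

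For the ``Moreover'' claim I would establish global optimality by a reduction. Decomposing each column of $W_z$ along $\mathbb{C}^M = \mathcal{S} \oplus \mathcal{S}^{\perp}$ in the $G_z$ metric, the $\mathcal{S}$-component leaves $|\det W|$ unchanged (its columns are combinations of those of $W_s$, so they can be cleared by column operations) while it strictly increases $\trace(W_z^{\htop} G_z W_z)$; hence every minimizer has all columns in $\mathcal{S}^{\perp}$. Writing $W_z = B Y$ for a fixed $G_z$-orthonormal basis $B$ of $\mathcal{S}^{\perp}$ reduces $J_W$, up to an additive constant, to $\| Y \|_F^2 - 2 \log | \det Y | = \sum_i ( \sigma_i^2 - 2 \log \sigma_i )$ in the singular values $\sigma_i$ of $Y$. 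Each summand is minimized at $\sigma_i = 1$, so the minimizers are precisely the unitary $Y$, i.e.\ $W_z = B Q$ with $Q \in \mathrm{U}_{M-K}$, which coincide with the stationary points characterized above.

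The main obstacle is this last step: upgrading ``stationary'' to ``globally optimal'' hinges on isolating the scalar function $\sigma^2 - 2\log\sigma$ after the subspace decomposition and the change of variables $W_z = BY$, and it requires the nonobvious fact that the determinant term is insensitive to the in-$\mathcal{S}$ part of $W_z$ while the trace term penalizes it. Everything else is routine linear algebra once the reinterpretation of \eqref{eq:KKT-Wz} as a $G_z$-orthonormal basis of $\mathcal{S}^{\perp}$ is in place.
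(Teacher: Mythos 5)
Your proof is correct. For the characterization of stationary points, your route and the paper's are essentially the same argument in different clothing: the paper solves the top $K$ rows of \eqref{eq:KKT-Wz} linearly as $W_z = U_z B$ with $B$ free, then lets the bottom $M-K$ rows force $B = (U_z^{\htop} G_z U_z)^{-1/2} Q$, which is exactly your observation that the solutions are the $G_z$-orthonormal bases of the $G_z$-orthogonal complement $\mathcal{S}^{\perp}$ of $\Span(W_s)$, unique up to a unitary factor. Where you genuinely diverge is the ``Moreover'' part. The paper argues indirectly: every stationary point yields $\trace(W_z^{\htop} G_z W_z) = M-K$ and the same $|\det W|$ (the unitary $Q$ drops out), so $J_W$ is constant on the stationary set; it then invokes Proposition~\ref{prop:opt-exist-sub} (the appendix existence result, whose proof is omitted and outsourced to the cited literature) to conclude that the attained minimizer is one of these stationary points, hence all of them are global minimizers. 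You instead prove optimality directly: the oblique decomposition $W_z = W_s C + P$ with $P$ in $\mathcal{S}^{\perp}$ shows the determinant term is blind to $C$ while the trace term strictly penalizes it, and the change of variables $W_z = BY$ against a fixed $G_z$-orthonormal basis $B$ collapses the objective, up to a constant, to $\sum_i (\sigma_i^2 - 2\log\sigma_i)$ over the singular values of $Y$, minimized precisely at unitary $Y$. Your approach buys self-containedness: it never needs Proposition~\ref{prop:opt-exist-sub}, and in fact proves that existence statement constructively as a byproduct (each scalar term is bounded below by $1$ and the bound is attained), whereas the paper's argument is shorter \emph{given} the external existence result and its ``equal value on all stationary points'' trick transfers to settings where no closed-form reduction like $W_z = BY$ is available. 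Both arguments rely on the same two facts you flagged — $W_s$ has full column rank by (C2), and $\image U_z = \mathcal{S}^{\perp}$ is independent of the current $W_z$ block — so no gap remains on either side.
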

\begin{proof}
The first $K$ rows of \eqref{eq:KKT-Wz} are linear with respect to $W_z$ and solved as $W_z = U_z B$, where $U_z$ is defined by \eqref{eq:ip1-Uz} and $B \in \mathbb{C}^{(M - K) \times (M - K)}$ is a free parameter.
The remaining $M - K$ rows of \eqref{eq:KKT-Wz} constrain $B$ and we obtain \eqref{eq:ip1-Uz}--\eqref{eq:ip1-Wz},
which certainly satisfies the stationary condition~\eqref{eq:KKT-Wz}.

We next prove the latter statement.
As \eqref{eq:ip1-Uz}--\eqref{eq:ip1-Wz} satisfies \eqref{eq:KKT-Wz}, $\trace (W_z^{\htop} G_z W_z) = M - K$ holds.
It also holds that $| \det W | = | \det [W_s, W_z Q^{-1}] |$ for any $Q \in \mathrm{U}_{M - K}$.
Hence, $J_W$ takes the same value on all stationary points.
On the other hand, by Proposition~\ref{prop:opt-exist-sub}, $J_W$ attains its minimum at some $W_z$, and this $W_z$ must satisfy the stationary condition \eqref{eq:KKT-Wz}.
\end{proof}
From Proposition~\ref{prop:ip1}, as an update formula of $W_z$, we can use \eqref{eq:ip1-Uz}--\eqref{eq:ip1-Wz} with $Q = I_{M - K}$.
Although this formula guarantees the monotonic nonincrease of the cost function, the computation of \eqref{eq:ip1-Wz} is not efficient.
We therefore propose an acceleration of \eqref{eq:ip1-Uz}--\eqref{eq:ip1-Wz}, resulting in \eqref{eq:ip1-Wz-fast} below.

We now show that we can adopt \eqref{eq:ip1-Wz-fast} below as an update formula for $W_z$ if we do not need to separate the noise to satisfy \eqref{eq:z-gauss}.
Let $U_z$ and $W_z$ be defined by \eqref{eq:ip1-Uz} and \eqref{eq:ip1-Wz}, respectively.
Let also $W_z' \in \mathbb{C}^{M \times (M - K)}$ be a matrix satisfying $\image W_z' = \image U_z (= \image W_z)$, i.e., $W_z' = W_z R$ for some $R \in \mathbb{C}^{(M - K) \times (M - K)}$.
Then, concerning~\eqref{eq:ip0-u} and \eqref{eq:ip1-Uz}, we have
\begin{align}
\label{eq:ip0-u-equiv}
&(W'^{\htop} G_k)^{-1} \bm{e}_k = (W^{\htop} G_k)^{-1} \bm{e}_k, \quad k = 1, \ldots,K,
\\
\label{eq:ip1-Uz-equiv}
&\quad \image \left( (W'^{\htop} G_z)^{-1} E_z \right) = \image \left( (W^{\htop} G_z)^{-1} E_z \right),
\end{align}
where $W' \coloneqq [W_s, W_z']$.
The equality \eqref{eq:ip0-u-equiv} implies that using $W'$ instead of $W$ does not affect the resultant $\bm{w}_k$ obtained from \eqref{eq:ip0-u}--\eqref{eq:ip0-w} for each $k = 1, \ldots, K$.
Also, \eqref{eq:ip1-Uz-equiv} means that $\image U_z$ is updated to the same subspace in \eqref{eq:ip1-Uz} regardless of whether we use $W$ or $W'$ on the right hand side of \eqref{eq:ip1-Uz}.
Hence, it turns out that we only have to update $W_z$ so as to satisfy $\image W_z = \image U_z$, unless we care about the noise components.
Such update of $W_z$ can be attained simply by $W_z \leftarrow U_z \coloneqq (W^{\htop} G_z)^{-1} E_z$.
In this paper, inspired by~\cite{scheibler2019over-iva}, we propose a more efficient update:
\begin{align}
\label{eq:ip1-Wz-fast}
W_z &\leftarrow \begin{pmatrix}
-(W_s^{\htop} G_z E_s)^{-1} (W_s^{\htop} G_z E_z)
\\
I_{M - K}
\end{pmatrix} \in \mathbb{C}^{M \times (M - K)},
\end{align}
where $E_s \coloneqq [\, \bm{e}_{1}, \ldots, \bm{e}_{K} \,]$ and $W_s \coloneqq [\, \bm{w}_1, \ldots, \bm{w}_K \,] =W E_s$.
We can check using the block matrix inversion that $W_z$ obtained by \eqref{eq:ip1-Wz-fast} satisfies $\image W_z = \image U_z$, where $U_z$ is given by \eqref{eq:ip1-Uz}.
This concludes the derivation of IP-1 and the procedure of IP-1 is summarized in Algorithm~\ref{alg:ip1}.

\vspace{-0pt}
\subsubsection{The derivation of IP-2 only for $K = 1$}
\label{sec:ip-2}

When $K = M = 2$ (or $K = 1$ and $M = 2$), it is known that the problem \eqref{problem-W} can be solved directly through a generalized eigenvalue problem~\cite{degerine2006maxdet,ono2012iva-stereo,ono2018ip-ASJ}.
We here extend this direct method to the case where $K = 1$ and $M \geq 2$, which is summarized in Proposition~\ref{prop:ip2}.
\begin{prop}
\label{prop:ip2}
Suppose (C1) and (C2), and let $K = 1$ and $M \geq 2$.
Then, a matrix $W = [ \bm{w}_1, W_z ]$ satisfies the stationary condition \eqref{eq:KKT-w}--\eqref{eq:KKT-Wz} if and only if
\begin{align}
\label{eq:ip2-eig}
\bm{u}_1 &\in \mathbb{C}^M ~\text{satisfying} ~ G_z \bm{u}_1 = \lambda G_1 \bm{u}_1, 
\\
\label{eq:ip2-orth}
U_z &\in \mathbb{C}^{M \times (M - 1)} ~\text{satisfying} ~ U_z^{\htop} G_z \bm{u}_1 = \bm{0}_{M - 1},
\\
\label{eq:ip2-w1}
\bm{w}_1 &= \bm{u}_1 \left( \bm{u}_1^{\htop} G_1 \bm{u}_1  \right)^{- \frac{1}{2}} e^{\sqrt{-1}\theta}, \quad \theta \in \mathbb{R}, 
\\
\label{eq:ip2-Wz}
W_z &= U_z \left( U_z^{\htop} G_z U_z \right)^{- \frac{1}{2}} Q, \quad Q \in \mathrm{U}_{M - 1},
\end{align}
where \eqref{eq:ip2-eig} is the generalized eigenvalue problem with $\bm{u}_1$ and $\lambda \in \mathbb{R}$ being a generalized eigenvector and the corresponding eigenvalue.

Moreover, if $\lambda$ in \eqref{eq:ip2-eig} is chosen as the largest generalized eigenvalue, then any $W$ obtained by \eqref{eq:ip2-eig}--\eqref{eq:ip2-Wz} globally minimizes $J_W$.
\end{prop}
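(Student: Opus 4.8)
The plan is to reduce the global-minimization claim, via the first-order characterization just established, to a single scalar comparison in the generalized eigenvalue $\lambda$. First I would evaluate $J_W$ at an arbitrary stationary point. Reading the first row of \eqref{eq:KKT-w} gives $\bm{w}_1^{\htop} G_1 \bm{w}_1 = 1$, and reading the diagonal of $W_z^{\htop} G_z W_z$ off \eqref{eq:KKT-Wz} gives $\trace(W_z^{\htop} G_z W_z) = M - 1$. Substituting these into $J_W$ collapses both quadratic terms to constants, leaving $J_W = M - 2\log|\det W|$ on the entire stationary set. Thus minimizing $J_W$ over stationary points is equivalent to maximizing $|\det W|$, and the phase $\theta$ and unitary $Q$ in \eqref{eq:ip2-w1}--\eqref{eq:ip2-Wz} are irrelevant since $|\det W|$ is insensitive to right multiplication by a unit-modulus scalar or a unitary block.

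The key step is to express $|\det W|$ in closed form through $\lambda$. I would combine \eqref{eq:KKT-w}--\eqref{eq:KKT-Wz} into the single identity $W^{\htop}[G_1\bm{w}_1,\, G_z W_z] = I_M$, so that $[G_1\bm{w}_1,\, G_z W_z] = W^{-\htop}$ and hence $|\det[G_1\bm{w}_1,\, G_z W_z]| = |\det W|^{-1}$. On the other hand, since $\bm{w}_1$ is a generalized eigenvector, $G_1\bm{w}_1 = \lambda^{-1} G_z \bm{w}_1$, which lets me factor $[G_1\bm{w}_1,\, G_z W_z] = G_z W \diag(\lambda^{-1},1,\ldots,1)$ and therefore $|\det[G_1\bm{w}_1,\, G_z W_z]| = \det(G_z)\,|\det W|\,/\,\lambda$. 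Equating the two expressions yields $|\det W|^2 = \lambda / \det(G_z)$, so that $J_W = M + \log\det(G_z) - \log\lambda$ at every stationary point. Positivity of $\lambda$ and $\det G_z$, used implicitly here, follows from (C1).

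From here the conclusion is immediate: because $\det(G_z)$ is a constant independent of the stationary point, $J_W$ is a strictly decreasing function of the selected generalized eigenvalue $\lambda$, so among all stationary points the ones built from the largest $\lambda$ attain the smallest objective value. To upgrade ``smallest among stationary points'' to ``global minimum'', I would invoke Proposition~\ref{prop:opt-exist-sub} to guarantee that $J_W$ actually attains its infimum; since $J_W$ is smooth on the open set of nonsingular $W$ and diverges as $|\det W|\to 0$, any minimizer is interior and hence satisfies \eqref{eq:KKT-w}--\eqref{eq:KKT-Wz}. Combining these facts, a global minimizer is precisely a stationary point associated with the largest generalized eigenvalue, which is exactly what \eqref{eq:ip2-eig}--\eqref{eq:ip2-Wz} with the top $\lambda$ produce.

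The main obstacle I anticipate is the determinant computation in the second paragraph: justifying the factorization $[G_1\bm{w}_1,\, G_z W_z] = G_z W \diag(\lambda^{-1},1,\ldots,1)$ and correctly tracking magnitudes and conjugates through $\det(W^{-\htop}) = (\overline{\det W})^{-1}$. Everything else is bookkeeping, and the reduction to a monotone dependence on $\lambda$ makes the optimality of the top eigenvalue transparent.
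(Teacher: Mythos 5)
There is a genuine gap: the proposition is an ``if and only if'' statement, and your proposal proves neither direction of the equivalence --- you invoke ``the first-order characterization just established'' as if it were given, but it is the first half of the claim being proved. Your own argument depends on both directions: the identity $G_1\bm{w}_1 = \lambda^{-1}G_z\bm{w}_1$ at an \emph{arbitrary} stationary point is exactly the ``only if'' direction (stationarity forces $\bm{w}_1$ to be a generalized eigenvector of the pencil $(G_z, G_1)$), and concluding that the candidates built from the largest eigenvalue are themselves stationary points (hence attain the value $M + \log\det G_z - \log\lambda_{\max}$) is the ``if'' direction. The missing piece is short but must be supplied, and the paper's version of it runs as follows: \eqref{eq:KKT-w} gives $\bm{w}_1^{\htop}G_1\bm{w}_1 = 1$ and $W_z^{\htop}G_1\bm{w}_1 = \bm{0}_{M-1}$, while the first row of \eqref{eq:KKT-Wz} gives $W_z^{\htop}G_z\bm{w}_1 = \bm{0}_{M-1}$; since $W$ is nonsingular by (C2), $\image W_z$ has dimension $M-1$, so its orthogonal complement is one-dimensional and $G_1\bm{w}_1$, $G_z\bm{w}_1$ must be parallel, yielding \eqref{eq:ip2-eig} and \eqref{eq:ip2-w1}; the remaining rows of \eqref{eq:KKT-Wz} give $W_z^{\htop}G_z W_z = I_{M-1}$ and hence \eqref{eq:ip2-orth}--\eqref{eq:ip2-Wz}. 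A second, smaller flaw: you cite Proposition~\ref{prop:opt-exist-sub} for existence of a minimizer, but that proposition fixes $W_s$ and varies only $W_z$, whereas here the minimization is joint over $\bm{w}_1$ and $W_z$; the correct existence result is Proposition~\ref{prop:opt-exist}. Your interiority argument (smoothness on the set of nonsingular $W$ and divergence of $J_W$ as $|\det W| \to 0$, so any minimizer is stationary) is fine once the right existence statement is in hand.

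That said, the part you do prove --- the ``Moreover'' claim --- is correct and structurally matches the paper: both reduce $J_W$ on the stationary set to $M - 2\log|\det W|$, establish $|\det W|^2 = \lambda/\det(G_z)$, and conclude by monotonicity in $\lambda$. Your route to the determinant identity is genuinely different and arguably cleaner: you combine \eqref{eq:KKT-w}--\eqref{eq:KKT-Wz} into $W^{\htop}[G_1\bm{w}_1,\, G_z W_z] = I_M$ and factor $[G_1\bm{w}_1,\, G_z W_z] = G_z W \diag(\lambda^{-1},1,\ldots,1)$, whereas the paper computes $|\det W|$ through the unnormalized parameterization $U = [\bm{u}_1, U_z]$, using $U_z^{\htop}G_z\bm{u}_1 = \bm{0}_{M-1}$ to block-diagonalize $U^{\htop}G_z U$ and then $\det(U^{\htop}G_z U) = |\det U|^2\det G_z$. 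Your version works directly at any stationary point without the explicit parameterization, and your bookkeeping of conjugates ($|\det W^{-\htop}| = |\det W|^{-1}$) and of positivity of $\lambda$ and $\det G_z$ under (C1) is correct. Had the equivalence been proved, the remainder would stand.
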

\begin{proof}
The ``if'' part is obvious and we prove the ``only if'' part.
The equations~\eqref{eq:KKT-w}--\eqref{eq:KKT-Wz} imply $\bm{w}_1^{\htop} G_1 \bm{w}_1 = 1$ and that $G_1 \bm{w}_1$ and $G_z \bm{w}_1$ are orthogonal to the subspace $\image W_z$ of dimension $M - 1$.
Hence, \eqref{eq:ip2-eig} and \eqref{eq:ip2-w1} are necessary.
Also, the equations~\eqref{eq:KKT-w}--\eqref{eq:KKT-Wz} together with \eqref{eq:ip2-w1} imply $W_z^{\htop} G_z W_z = I_{M - 1}$ and that $G_z \bm{u}_1$ are orthogonal to $\image W_z$.
Thus, \eqref{eq:ip2-orth} and \eqref{eq:ip2-Wz} are necessary.

We next prove the latter statement.
As \eqref{eq:ip2-eig}--\eqref{eq:ip2-Wz} satisfy \eqref{eq:KKT-w}--\eqref{eq:KKT-Wz}, the sum of the first and second terms of $J_W$ becomes $M$.
On the other hand, as for the $\log \det$ term, it holds that
\begin{align*}
| \det W |&= \left( \bm{u}_1^{\htop} G_1 \bm{u}_1 \right)^{-\frac{1}{2}} \cdot \det \left( U_z^{\htop} G_z U_z \right)^{-\frac{1}{2}} \cdot | \det U |
\\
&= \sqrt{\lambda} \det \left( U^{\htop} G_z U \right)^{-\frac{1}{2}} \cdot | \det U | = \sqrt{\lambda} \det (G_z)^{-\frac{1}{2}},
\end{align*}
where $U \coloneqq [\bm{u}_1, U_z]$ and we use $U_z^{\htop} G_z \bm{u}_1 = \bm{0}_{M - 1}$ in the second equality.
Hence, the largest $\lambda$ leads to the smallest $J_W$. 
\end{proof}

From Proposition~\ref{prop:ip2}, we can update $W = [\bm{w}_1, W_z]$ using \eqref{eq:ip2-eig}--\eqref{eq:ip2-Wz} with $\theta = 0$ and $Q = I_{M - 1}$, minimizing $J_W$ globally.
Note that updating $\bm{w}_1$ by \eqref{eq:ip2-eig} and \eqref{eq:ip2-w1} is independent of $W_z$ since $G_1$ and $G_z$ are independent of $W_z$.
We can thus simplify the procedure of IP-2, which is summarized in Algorithm~\ref{alg:ip2}.
Interestingly, because $G_z$ and $G_1$ can be viewed as the covariance matrices of the mixture and noise signals,
IP-2 turns out to be a MaxSNR beamformer~\cite{van2004optimum,warsitz2007maxsnr}.
In other words, OverIVA with IP-2 is a method that alternately updates the target-source power spectrum and MaxSNR beamformer.

\section{Relation to prior works}
\label{sec:relation}

The conventional OverIVA~\cite{scheibler2019over-iva}, denoted as OverIVA-OC, is an acceleration of IVA.
It exploits not only the independence of sources but also the orthogonality constraint (OC~\cite{koldovsky2018ive, koldovsky2017ive-eusipco, scheibler2019over-iva}):
 the sample correlation (or inner product) between the target sources and the noise is to be zero, i.e., $W_s^{\htop} G_z W_z = O$.
This constraint is nothing but the first $K$ rows in \eqref{eq:KKT-Wz},
meaning that the stationarity of the Gaussian noise implicitly implies OC.
In this subsection, we clarify that OverIVA-OC can also be obtained from the proposed OverIVA.

OverIVA-OC restricts $W_z$ to the form $W_z = \begin{pmatrix} -B_z \\ I_{M - K} \end{pmatrix}$,
which together with OC imply $B_z = -(W_s^{\htop} G_z E_s)^{-1} (W_s^{\htop} G_z E_z)$.
This relation between $B_z$ and $W_s$ gives an update rule of $W_z$.
The procedure of OverIVA-OC is summarized in Algorithm~\ref{alg:ip3}.
Note that the update of $W_z$ has to be done immediately after optimizing any other variables $\bm{w}_1, \ldots, \bm{w}_K$ so as to always enforce OC in the model.

It is easy to see that Algorithm \ref{alg:ip3} can also be obtained from the proposed OverIVA by selecting IP-3 as the optimization procedure of BCD.
Hence, OverIVA-OC can be viewed as a special case of our OverIVA.
The main advantage of our approach is that, by removing OC, we can develop several algorithms including IP-1 and IP-2 that are expected to be more efficient than IP-3.
In fact, the computational cost of IP-1 per iteration is slightly less than IP-3.
Also, the convergence speed of IP-2 is much faster than IP-3.
The advantages of IP-1 and IP-2 are shown experimentally (see Section~\ref{sec:exp}).

\begin{algorithm}[t]
\caption{OverIVA}
\label{alg:iva}
{\small
\begin{algorithmic}[1]
\STATE Set $\varepsilon_1, \varepsilon_2 \in \mathbb{R}_{\geq 0}$ (we set $\varepsilon_1 = 10^{-5}$ and $\varepsilon_2 = 10^{-1}$ in \S\ref{sec:exp}).
\STATE Initialization: $W(f) = I_M$ for all $f = 1,\ldots,F$.
\STATE $G_z(f) \leftarrow \frac{1}{T} \sum_{t = 1}^T \bm{x}(f,t) \bm{x}(f,t)^{\htop}$ for all $f$.
\REPEAT
\STATE $s_{k}(f,t) \leftarrow \bm{w}_k(f)^{\htop} \bm{x}(f,t)$ for all $k,f,t$.
\STATE $\lambda_k(t) \leftarrow \max \{ \frac{1}{F} \| \bm{s}_k(t) \|^2, \varepsilon_1 \}$ for all $k,t$.
\STATE $G_k(f) \leftarrow \frac{1}{T} \sum_{t = 1}^T \frac{\bm{x}(f,t) \bm{x}(f,t)^{\htop}}{\lambda_{k}(f,t)} + \varepsilon_2 I_M$ for all $k,f$.
\STATE Update $W(f)$ using IP-1, IP-2, or IP-3 for all $f$.
\STATE Normalization for numerical stability:
$c_k \coloneqq \frac{1}{T} \sum_t \lambda_k(t)$, $\lambda_k(t) \leftarrow \lambda_k(t) c_k^{-1}$ and $\bm{w}_k(f) \leftarrow \bm{w}_k(f) c_k^{-1/2}$ for all $k, f, t$.
\UNTIL{convergence}
\STATE In the case of IP-2, update $W_z(f)$ using \eqref{eq:ip1-Wz-fast} for all $f$.
\STATE The separation result is obtained by \eqref{eq:projection-back}.
\end{algorithmic}
}
\end{algorithm}
\begin{algorithm}[t]
\caption{IP-1}
\label{alg:ip1}
{\footnotesize
\begin{algorithmic}[1]
\FOR{$k = 1,\ldots,K$}
\STATE $\bm{u}_k(f) \leftarrow \left( W(f)^{\htop} G_k(f) \right)^{-1} \bm{e}_k$
\STATE $\bm{w}_k(f) \leftarrow \bm{u}_k(f) \left( \bm{u}_k(f)^{\htop} G_k(f) \bm{u}_k(f) \right)^{-\frac{1}{2}}$
\ENDFOR
\STATE $W_z(f) \leftarrow \begin{pmatrix}
- (W_s(f)^{\htop} G_z(f) E_s)^{-1} (W_s(f)^{\htop} G_z(f) E_z) \\
I_{M - K}
\end{pmatrix}$
\end{algorithmic}
}
\end{algorithm}
\begin{algorithm}[t]
\caption{IP-2 (only for $K = 1$)}
\label{alg:ip2}
{\footnotesize
\begin{algorithmic}[1]
\STATE Solve the generalized eigenvalue problem $G_z(f) \bm{u} = \lambda G_1(f) \bm{u}$ to obtain the eigenvector $\bm{u}$ corresponding to the largest eigenvalue.
\STATE $\bm{w}_1(f) \leftarrow \bm{u} \left( \bm{u}^{\htop} G_1(f) \bm{u}  \right)^{- \frac{1}{2}}$
\end{algorithmic}
}
\end{algorithm}
\begin{algorithm}[h]
\caption{IP-3 (The conventional OverIVA with OC~\cite{scheibler2019over-iva} is equivalent to the proposed OverIVA with IP-3 as shown in \S \ref{sec:relation})}
\label{alg:ip3}
{\footnotesize
\begin{algorithmic}[1]
\FOR{$k = 1,\ldots,K$}
\STATE Execute lines 2, 3, and 5 in Algorithm~\ref{alg:ip1} to update $\bm{w}_k(f)$ and $W_z(f)$.
\ENDFOR
\end{algorithmic}
}
\end{algorithm}

\section{Experiments}
\label{sec:exp}

We carried out experiments to compare the separation and runtime performances of the following four methods:
\begin{description}
\setlength{\parskip}{0pt}
\setlength{\itemsep}{0pt}
\item[AuxIVA~\cite{ono2011iva}:] The conventional auxiliary-function-based IVA~\cite{ono2011iva}, followed by picking the $K$ signals with largest powers.
\item[OverIVA-OC~\cite{scheibler2019over-iva}:] It is OverIVA(IP-3). See Algorithms~\ref{alg:iva} and \ref{alg:ip3}.
\item[OverIVA(IP-1) and OverIVA(IP-2):] See Algorithms~\ref{alg:iva}, \ref{alg:ip1}, and \ref{alg:ip2}.
\end{description}
As evaluation data, we generated synthesized convolutive mixtures of target speech and untarget white-noise signals.
To this end, we used point-source speech signals from SiSEC2008~\cite{vincent2009sisec2008} and selected a set of room impulse responses (RIR) recorded in the room E2A from RWCP Sound Scene Database~\cite{rwcp}.
Then, for a given numbers of speakers $K$, white noises $L$, and microphones $M$,
(i) we randomly picked $K$ speech signals and $K + L$ RIRs,
(ii) generated $L$ white noises,
(iii) convolved these $K + L$ point-source signals using the RIRs,
and (iv) added the obtained $K + L$ spatial images so that
$\mathrm{SINR} \coloneqq 10 \log_{10} \frac{\frac{1}{K} \sum_{k = 1}^K \sigma_k^2 }{\sum_{l = 1}^L \sigma_l^2 }$ [dB] becomes a specified value,
where $\sigma_k^2$ and $\sigma_l^2$ denote the variances of speech and white noise signals, respectively.
We generated 10 mixtures for each condition.

We initialized $W(f) = I_M$, and set the number of optimization iterations to 50 except set it to three in OverIVA(IP-2).
The sampling rate was 16 kHz, the reverberation time was 300 ms, the frame length was 4096 (256 ms), and the frame shift was $1/4$ of the frame length.

Table~\ref{table:exp-res} shows the BSS performance averaged over 10 samples in terms of SDR~\cite{vincent2006sdr} and real time factor (RTF) calculated as \textit{``the total computation time [sec] divided by the signal length (10 sec).''} 
As expected, the proposed OverIVA is faster than the other methods while providing the comparable SDR.
In particular, OverIVA(IP-2) specialized for $K = 1$ significantly outperforms the other methods.

\vspace{-3 mm}
\begin{table}[h]
\begin{center}
\caption{The resultant SDR  [dB] and real time factor (RTF)}
{\scriptsize
\label{table:exp-res}
\vspace{-0 mm}
\begin{tabular}{@{ }c@{ }c@{}|ccc|cccc@{ }} \hline
\multicolumn{2}{c|}{\#speeches \& \#noises} & \multicolumn{3}{c|}{$K = 1, L = 5$} & \multicolumn{4}{c}{$K = 2, L = 1$} \\
\multicolumn{2}{c|}{$\mathrm{SINR}$ [dB]} & \multicolumn{3}{c|}{0 dB} & \multicolumn{4}{c}{10 dB} \\
\multicolumn{2}{c|}{\#channels $(M)$} & 3 & 5 & 7 & 3 & 4 & 5 & 6 \\ \hline
Mixture & SDR & 0.0 & 0.0 & 0.0 & -0.4 & -0.4 & -0.5 & -0.4 \\
\hline
\multirow{2}{*}{AuxIVA~\cite{ono2010auxica}} 
& SDR & 3.9 & 3.5 & 4.1 & 5.7 & 7.2 & 7.6 & 7.3 \\
& RTF & 0.26 & 0.80 & 3.23 & 0.25 & 0.48 & 0.80 & 1.84 \\ 
\hline
\multirow{2}{*}{OverIVA-OC~\cite{scheibler2019over-iva}}
& SDR &  4.5 & 5.6 & 6.6 & 6.3 & 7.6 & 5.8 & 6.1 \\
& RTF & 0.10 & 0.22 & 0.73 & 0.19 & 0.27 & 0.41 & 0.81 \\
\hline
\multirow{2}{*}{OverIVA(IP-1)}
& SDR & 4.5 & 5.6 & 6.6 & 6.1 & 7.5 & 6.0 & 6.2 \\
& RTF & 0.10 & 0.22 & 0.73 & 0.18 & 0.25 & 0.38 & 0.76 \\
\hline
\multirow{2}{*}{OverIVA(IP-2)}
& SDR & 5.3 & 7.0 & 8.6 & - & - & - & - \\ 
& RTF & 0.017 & 0.046 & 0.10 & - & - & - & - \\ \hline
\end{tabular}
Note that OverIVA-OC and OverIVA(IP-1) are the same when $K = 1$.
\vspace{-3pt}
All algorithms were implemented in Python 3.7 and run on a laptop PC with 2.6 GHz Intel Core i7.
}
\end{center}
\end{table}
\vspace{-8 mm}

\section{Conclusion}
\vspace{-1 mm}
We proposed a computationally efficient IVA for overdetermined BSS, called OverIVA.
Unlike the previous OverIVA~\cite{scheibler2019over-iva} relying on the orthogonality constraint, our approach exploits only the independence of sources and the stationarity of the Gaussian noise.
We verified the effectiveness of the proposed OverIVA in the experiment.

\vspace{-1 mm}
\section{Appendix}
\vspace{-1 mm}
To give a proof of Proposition~\ref{prop:ip1}, we need Proposition~\ref{prop:opt-exist-sub} below, which is a  modification of Proposition~\ref{prop:opt-exist} provided in~\cite{degerine2004ica-ML,degerine2006maxdet,yeredor2009HEAD,yeredor2012SeDJoCo}.
The proofs of Propositions~\ref{prop:opt-exist} and~\ref{prop:opt-exist-sub} are almost the same and so we omit it.
\begin{prop}[see, e.g., \cite{degerine2004ica-ML,degerine2006maxdet,yeredor2009HEAD,yeredor2012SeDJoCo}]
\label{prop:opt-exist}
Suppose (C1).
Then, $J_W$ is lower bounded and attains its minimum.
\end{prop}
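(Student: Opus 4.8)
The plan is to exhibit $J_W$ as a coercive, continuous function on the open set of nonsingular matrices and then apply a Weierstrass-type argument via a minimizing sequence. As a first step I would absorb the trace term into the sum by writing $G_k := G_z$ for $k \in \{K+1,\ldots,M\}$, so that $J_W = \sum_{k=1}^{M} \bm{w}_k^{\htop} G_k \bm{w}_k - 2\log|\det W|$. Under (C1) each $G_k$ is positive definite; letting $\mu > 0$ be the smallest eigenvalue occurring among $G_1,\ldots,G_M$, I immediately obtain $\sum_{k=1}^{M} \bm{w}_k^{\htop} G_k \bm{w}_k \geq \mu \sum_{k=1}^{M} \|\bm{w}_k\|^2 = \mu \|W\|^2$, where $\|\cdot\|$ denotes the Frobenius norm.

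The crux is to dominate the $-2\log|\det W|$ term by this quadratic part. Here I would invoke Hadamard's inequality $|\det W| \leq \prod_{k=1}^{M}\|\bm{w}_k\|$ together with AM-GM, which on the one side give $\prod_k \|\bm{w}_k\|^2 \leq (\frac{1}{M}\sum_k\|\bm{w}_k\|^2)^{M}$, i.e. $|\det W| \leq (\|W\|^2/M)^{M/2}$, and on the other side $\sum_k \bm{w}_k^{\htop} G_k \bm{w}_k \ge \mu M\,|\det W|^{2/M}$. Substituting the latter reduces the problem to a single scalar variable: with $d := |\det W|$ one finds $J_W \geq \mu M d^{2/M} - 2\log d$, and the right-hand side, viewed on $(0,\infty)$, tends to $+\infty$ as $d\to 0^+$ and as $d\to\infty$ and is minimized at $d = \mu^{-M/2}$ with value $M(1+\log\mu)$. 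This single computation yields the lower bound $J_W \geq M(1+\log\mu)$. Feeding the former estimate in terms of $r := \|W\|^2$ gives $J_W \geq \mu r - M\log r + M\log M$, which likewise tends to $+\infty$ as $r\to\infty$ and supplies coercivity in $\|W\|$.

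With these two one-dimensional estimates in hand, the attainment follows. On any sublevel set $\{J_W \leq c\}$ the coercivity bound forces $\|W\| \leq R$ for some finite $R$, while the inequality $-2\log|\det W| \leq c - \sum_k \bm{w}_k^{\htop} G_k \bm{w}_k \leq c$ (the quadratic part being nonnegative) forces $|\det W| \geq e^{-c/2} > 0$. Hence the sublevel set lies in the compact region $\{\,\|W\| \leq R,\ |\det W| \geq e^{-c/2}\,\}$, which is contained in the open domain of nonsingular matrices on which $J_W$ is continuous. Taking a minimizing sequence with values eventually below $m+1$, where $m := \inf J_W$, extracting a convergent subsequence by compactness, and passing to the limit by continuity then produces a minimizer.

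The main obstacle I anticipate is precisely the key inequality of the second paragraph: because the natural domain is the open set of invertible matrices, one cannot simply minimize over a compact set, and the $-2\log|\det W|$ term is unbounded below near the boundary. Everything hinges on quantitatively trading $\log|\det W|$ against the quadratic forms through Hadamard and AM-GM so that both the approach to singularity ($d\to 0^+$) and the escape to infinity ($r\to\infty$) are penalized. Once that single scalar estimate is established, lower-boundedness, coercivity, and the separation of the sublevel sets from the singular locus all follow from it, and the closing Weierstrass argument is routine.
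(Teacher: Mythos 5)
Your proof is correct, and it is essentially the standard existence argument found in the references the paper cites for this proposition (the paper itself explicitly omits the proof, deferring to \cite{degerine2004ica-ML,degerine2006maxdet,yeredor2009HEAD,yeredor2012SeDJoCo}): positive definiteness plus Hadamard's inequality to get the scalar bound $J_W \geq \mu M d^{2/M} - 2\log d$ with $d = |\det W|$, coercivity in the Frobenius norm, confinement of sublevel sets to a compact subset of the nonsingular matrices via $|\det W| \geq e^{-c/2}$, and a Weierstrass minimizing-sequence argument. One minor simplification worth noting: applying Hadamard column-wise, $-2\log|\det W| \geq -2\sum_{k=1}^{M}\log\|\bm{w}_k\|$, gives $J_W \geq \sum_{k=1}^{M}\left(\mu\|\bm{w}_k\|^2 - 2\log\|\bm{w}_k\|\right) \geq M(1+\log\mu)$ directly and yields the same coercivity, so your AM-GM aggregation step is dispensable.
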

\begin{prop}
\label{prop:opt-exist-sub}
Suppose (C1) and let $W \coloneqq [W_s, W_z] \in \mathbb{C}^{M \times M}$ be a matrix.
If the submatrix $W_s$ is full column rank and $W_z$ is a variable,
then the function $J(W_z) \coloneqq \trace \left( W_z^{\htop} G_z W_z  \right) - 2 \log | \det W |$ is lower bounded and attains its minimum.
\end{prop}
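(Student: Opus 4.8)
The plan is to prove the statement by a coercivity (properness) argument. I would regard $J$ as a continuous function on its natural domain $\mathcal{D} := \{ W_z \in \mathbb{C}^{M \times (M-K)} : \det[W_s, W_z] \neq 0 \}$, which is open, and show that $J$ tends to $+\infty$ both as $\|W_z\| \to \infty$ and as $W_z$ approaches the singular locus $\partial \mathcal{D}$. These two facts together force every sublevel set to be compact, so that the infimum is finite and attained.

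First I would exploit (C1). Since $G_z$ is positive definite, set $\mu := \lambda_{\min}(G_z) > 0$; writing $W_z = [\bm{w}_{K+1}, \ldots, \bm{w}_M]$ gives
\[
\trace(W_z^{\htop} G_z W_z) = \sum_{k = K+1}^M \bm{w}_k^{\htop} G_z \bm{w}_k \ge \mu \sum_{k = K+1}^M \|\bm{w}_k\|^2 .
\]
For the determinant term I would invoke Hadamard's inequality, $|\det[W_s, W_z]| \le \bigl( \prod_{k=1}^K \|\bm{w}_k\| \bigr) \bigl( \prod_{k=K+1}^M \|\bm{w}_k\| \bigr)$; as $W_s$ is fixed, the first product is a constant $c_0 > 0$, whence $-2\log|\det[W_s, W_z]| \ge -2\log c_0 - 2\sum_{k=K+1}^M \log\|\bm{w}_k\|$. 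Combining the two estimates yields
\[
J(W_z) \ge \sum_{k=K+1}^M \bigl( \mu \|\bm{w}_k\|^2 - 2\log\|\bm{w}_k\| \bigr) - 2\log c_0 .
\]

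The scalar map $r \mapsto \mu r^2 - 2\log r$ is bounded below on $(0,\infty)$ and diverges to $+\infty$ both as $r \to 0^+$ and as $r \to \infty$, which immediately delivers the lower bound on $J$. For attainment I would show that the sublevel set $S_c := \{ W_z \in \mathcal{D} : J(W_z) \le c \}$ is compact for every $c$. Boundedness follows because the last display confines each $\|\bm{w}_k\|$ (for $k > K$) to a compact subinterval of $(0,\infty)$. Closedness in $\mathbb{C}^{M \times (M-K)}$ is where the coercivity does the real work: if $W_z^{(n)} \in S_c$ converged to some $W_z^\star$ with $\det[W_s, W_z^\star] = 0$, then $-2\log|\det[W_s, W_z^{(n)}]| \to +\infty$ while the trace terms stay bounded and cannot compensate, forcing $J(W_z^{(n)}) \to +\infty$ and contradicting membership in $S_c$; hence $W_z^\star \in \mathcal{D}$ and, by continuity, $W_z^\star \in S_c$. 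A nonempty compact sublevel set together with continuity then produces a minimizer.

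I expect the main obstacle to be handling the boundary of the domain cleanly, that is, ruling out a minimizing sequence drifting toward a $W_z$ that makes $[W_s, W_z]$ singular. The crux is that both failure modes, $\|W_z\| \to \infty$ and $\det[W_s, W_z] \to 0$, must drive $J$ to $+\infty$, and the single Hadamard-based inequality above is what lets me capture both simultaneously; once that is in place, verifying the elementary behaviour of $\mu r^2 - 2\log r$ is routine.
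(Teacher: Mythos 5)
Your proof is correct, and it is as close to ``the paper's approach'' as one can judge: the paper omits the proof of this proposition entirely, remarking only that it is almost the same as that of Proposition~\ref{prop:opt-exist}, which is itself deferred to the cited literature (D\'eg\'erine--Za\"idi, Yeredor et al.), where existence of a minimizer for such $\trace$-minus-$\log\det$ criteria is established by exactly the coercivity-plus-compact-sublevel-set route you take. Your two estimates are sound: (C1) gives $\trace(W_z^{\htop} G_z W_z) \geq \mu \sum_{k>K} \|\bm{w}_k\|^2$ with $\mu = \lambda_{\min}(G_z) > 0$, Hadamard's inequality gives $-2\log|\det W| \geq -2\log c_0 - 2\sum_{k>K}\log\|\bm{w}_k\|$ (the full-column-rank hypothesis on $W_s$ guarantees both $c_0 > 0$ and, by extension to a basis of $\mathbb{C}^M$, that the domain $\mathcal{D}$ is nonempty, so some sublevel set is nonempty), and the elementary behaviour of $r \mapsto \mu r^2 - 2\log r$ yields the lower bound and the boundedness of sublevel sets. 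One caveat on your closing commentary rather than on the proof itself: the claim that the single Hadamard-based inequality ``captures both failure modes simultaneously'' is an overstatement, since Hadamard only sees column norms and is blind to degeneration of $[W_s, W_z]$ through linear dependence with all column norms bounded away from $0$ and $\infty$ --- in that regime the Hadamard-derived lower bound on $-2\log|\det W|$ remains finite. Fortunately your actual closedness step does not lean on Hadamard there: it uses continuity of the determinant, the direct divergence of $-2\log|\det[W_s, W_z^{(n)}]|$, and the boundedness of the trace term along a convergent sequence, which is precisely what is needed. With that step in place, Weierstrass on the nonempty compact sublevel set finishes the argument, so your proof is complete and self-contained --- indeed more explicit than what the paper provides.
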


\vfill\pagebreak
\newpage
\bibliographystyle{IEEEbib}
\bibliography{refs}
\end{document}